\documentclass[conference]{IEEEtran}
\IEEEoverridecommandlockouts

\usepackage{cite}
\usepackage{adjustbox}
\usepackage{pdfpages}
\usepackage{amssymb,lineno}
\usepackage{algorithm}
\usepackage{algorithmic}
\usepackage{multirow}
\usepackage[tight,footnotesize]{subfigure}
\usepackage{epsfig}
\usepackage{footmisc}
\usepackage{footnpag}
\usepackage{amsmath}
\usepackage{amsthm}
\usepackage{paralist}
\usepackage{color}
\usepackage{perpage}
\usepackage{verbatim}
\usepackage{enumitem}
\usepackage{tikz}
\usepackage{listings}
\usepackage{graphicx}
\usepackage{soul}
\usepackage{fancyhdr}
\pagenumbering{gobble}
\usepackage{circledtext}

\newcommand*\circled[1]{\tikz[baseline=(char.base)]{
            \node[shape=circle,fill,inner sep=0.5pt] (char) {\textcolor{white}{#1}};}}

\lstset{
    language=C, 
    basicstyle=\footnotesize,
    tabsize=2,
    numbers=left
}

\def\BibTeX{{\rm B\kern-.05em{\sc i\kern-.025em b}\kern-.08em
    T\kern-.1667em\lower.7ex\hbox{E}\kern-.125emX}}
    
\setlength{\floatsep}{1.2em}
\setlength{\textfloatsep}{1em}
\setlength{\intextsep}{0.5em}
\setlength{\belowcaptionskip}{0.3em} 
\setlist[itemize]{topsep=\parskip}

\newtheorem{theorem}{Theorem}

\begin{document}

\pagestyle{plain}

\title{HoSZp: An Efficient Homomorphic Error-bounded
Lossy Compressor for Scientific Data\vspace{-5mm}}

\author{\IEEEauthorblockN{Tripti Agarwal\IEEEauthorrefmark{1},
Sheng Di\IEEEauthorrefmark{2},
Jiajun Huang\IEEEauthorrefmark{3},
Yafan Huang \IEEEauthorrefmark{4},
Ganesh Gopalakrishnan\IEEEauthorrefmark{1},\\
Robert Underwood\IEEEauthorrefmark{2},
Kai Zhao\IEEEauthorrefmark{5},
Xin Liang\IEEEauthorrefmark{6},
Guanpeng Li\IEEEauthorrefmark{2},
Franck Cappello\IEEEauthorrefmark{2}}
\IEEEauthorblockA{\IEEEauthorrefmark{1}University of Utah, Salt Lake City, UT, USA}
\IEEEauthorblockA{\IEEEauthorrefmark{2}Argonne National Laboratory, Lemont, IL, USA}
\IEEEauthorblockA{\IEEEauthorrefmark{3}University of California, Riverside, CA, USA}
\IEEEauthorblockA{\IEEEauthorrefmark{4}University of Iowa, Iowa City, IA, USA}
\IEEEauthorblockA{\IEEEauthorrefmark{5}Florida State University, Tallahassee, FL, USA}
\IEEEauthorblockA{\IEEEauthorrefmark{6}University of Kentucky, Lexington, KY, USA}

 tripti.agarwal@utah.edu, sdi1@anl.gov, jhuan380@ucr.edu, yafan-huang@uiowa.edu, ganesh@cs.utah.edu, \\ runderwood@anl.gov, kzhao@cs.fsu.edu, xliang@uky.edu, guanpeng-li@uiowa.edu, cappello@mcs.anl.gov

\thanks{Corresponding author: Sheng Di, Mathematics and Computer Science Division, Argonne National Laboratory, 9700 Cass Avenue, Lemont, IL 60439, USA}
}

\maketitle

\begin{abstract}

Error-bounded lossy compression has been a critical technique to significantly reduce the sheer amounts of simulation datasets for high-performance computing (HPC) scientific applications while effectively controlling the data distortion based on user-specified error bound. In many real-world use cases, users must perform computational operations on the compressed data (a.k.a. homomorphic compression). However, none of the existing error-bounded lossy compressors support the homomorphism, inevitably resulting in undesired decompression costs. In this paper, we propose a novel homomorphic error-bounded lossy compressor (called HoSZp), which supports not only error-bounding features but efficient computations (including negation, addition, multiplication, mean, variance, etc.) on the compressed data without the complete decompression step, which is the first attempt to the best of our knowledge. We develop several optimization strategies to maximize the overall compression ratio and execution performance. We evaluate HoSZp compared to other state-of-the-art lossy compressors based on multiple real-world scientific application datasets.   

\end{abstract}

\begin{IEEEkeywords}
Error-bounded Lossy Compression, Scientific Application, Homomorphism
\end{IEEEkeywords}

\section{Introduction}
\label{sec:intro}
Today's scientific applications tend to be running on extremely large execution scales, which may easily produce sheer amounts of simulation datasets that need to be kept in memory or stored in disks with limited storage capacity. Climate simulations, for example, may produce  200+ TB of data within 16 seconds \cite{cesm-bigdata-case}, and Fusion simulations can generate over
200 PB of data in a single run \cite{fusion-bigdata-case}. Such a large volume of simulation datasets may cause serious issues in data storage and transfer because of the limited storage space and data movement bandwidth (such as network, I/O, and memory).

Error-bounded lossy compression \cite{sz16,sz17,sz-auto,zfp,snapshot-compression-nbody,FAZ,SZx} has been proposed for years to resolve the above issues, especially because it can get fairly high compression ratios while strictly controlling the data distortion based on user-required error bound. For example, SZ and ZFP have been effective in significantly improving the I/O data writing performance, as shown in \cite{compression-io}. MDZ \cite{mdz} can be used to substantially reduce the storage size for Molecular Dynamics simulations while preserving the radio distribution function (RDF) very well. Wu et al. \cite{quantum-data-compression} developed an efficient lossy compression algorithm that can effectively compress the memory footprint for quantum computing simulations at runtime, which can significantly lower the requirement of memory capacity. Error-bounded lossy compression (e.g., Ocelot \cite{ocelet}) has also been used to improve the data transfer on a wide area network (WAN).  
Some general-purpose lossy compressors \cite{sperr, FAZ} can significantly reduce the scientific data size, though they may suffer relatively low compression speed. FAZ, for example, can compress large turbulence simulation data (Miranda \cite{miranda}) and seismic data (RTM \cite{RTM-tutorial}) by 93.6$\times$ and 514$\times$, respectively, at the relative error bound of $10^{-4}$ (a.k.a., 1E-4). Such compressors are very helpful in the use-case with very limited storage capacity or data transfer bandwidth. 

In addition to the above use cases which mainly make use of lossy compression to reduce storage size or mitigate data transfer cost, quite a few emerging use cases require performing certain operations on top of the compressed data. The existing compression methods, however, do not support performing various operations on the compressed data, so the users have to decompress the full dataset before executing the operations, inevitably introducing a high execution cost. For example, quantum circuit simulation \cite{quantum-data-compression}) may produce an extremely large amount of data to keep in memory, so it needs to compress the data to control the memory footprint. The data stored in the compressed format may need to be decompressed upon the need of simulation at runtime, which requires extra decompression steps inevitably for the traditional compressors. Another typical example is using lossy compression to reduce the communication cost to accelerate the overall MPI collective operation performance \cite{c-coll}. In the existing solutions, each process participating in the collaborative operation needs to fully decompress the compressed data received from another process, execute an arithmetic aggregation/reduction operation (such as addition), and then perform another compression on the aggregation dataset. If a compressor supports performing operations on the compressed data, the extra decompression cost can be saved or minimized, which can thus improve the overall performance in turn.   

In this paper, we develop a compression mechanism allowing to perform various arithmetic operations (such as negation, addition, and multiplication) on the error-bounded compressed datasets without expensive full decompression. We call such an operation-supported compression mechanism a \emph{homomorphic error-bounded lossy compression} in our paper.  \textit{Homomorphism} is defined as a structure-preserving mapping (e.g., the original data vs. compressed data), such that the arithmetic operations can be applied on top of compressed/encrypted data. 
As mentioned previously, homomorphic error-bounded lossy compression is very helpful in many emerging use cases such as reducing memory footprint and avoiding expensive decompression costs because of avoiding the full decompression step.  

Developing an efficient homomorphic error-bounded lossy compression is very challenging because of diverse compressor designs. In general, each existing error-bounded lossy compressor involves multiple steps from the data decorrelation to lossless encoding. For example, ZFP decorrelates the data by a blockwise near-orthogonal transform and SZ leverages various data prediction methods to do it. In order to reach high compression ratios, the lossy compressors often depend on sophisticated lossless encoders. For instance, ZFP adopts an embedded encoding and SZ chooses to use Huffman encoding + Zstd \cite{zstd}. Merging the arithmetic operations into these encoding techniques is quite non-trivial. 

Our proposed novel error-bounded lossy compression mechanism supports homomorphism, which is the first attempt in the error-bounded lossy compression community to the best of our knowledge. The fundamental idea is designing an efficient, lightweight compression pipeline that takes into account the execution of potential operations on the compressed data, and also minimizes the required steps and cost in the decompression in terms of performing the user-specified operations. We summarize the key contributions as follows: 
\begin{itemize}
    \item We develop an efficient error-bounded lossy compression method, which supports homomorphic compression, and also provide a theoretical proof for the guarantee of error control in our design. 
    \item We carefully optimize the homomorphism support based on various arithmetic operations (negation, addition, subtraction, multiplication, standard deviation, etc.) on compressed datasets without the full decompression. 
    \item We perform a comprehensive evaluation based on multiple real-world scientific datasets for our homomorphic compressor to show that our compressor is homomorphic and can improve the execution performance ranging from 1.30$\times$ to more than 245$\times$ when compared with traditional workflow for various homomorphic operations across different datasets. 
    \item We perform an experiment to show the effectiveness of HoSZp in the distributed environment using real-world data, with up to 2.08$\times$ performance speedup. Our code will be available publicly once the paper is accepted.
\end{itemize}

The remainder of the paper is organized as follows. Section \ref{sec:relatedwork} discusses the related works. We formulate the research problem in Section \ref{sec:problem}. We present the compression pipeline and homomorphic compression design in Section \ref{sec:Design}. We describe the key homomorphic operations integrated with the compression as well as the optimizations in Section \ref{sec:homomorphicOperations}. In Section \ref{sec:PerformaceEvaluation}, we provide the evaluation results as well as the analysis.  In the end, we conclude the paper with a vision of the future work in Section \ref{sec:conclude}. 
\section{Related Work}
\label{sec:relatedwork}

Error-bounded lossy compressors, such as ZFP \cite{zfp} and SZ\cite{sz16,sz17}, are often used for scientific data compression and can achieve a high compression ratio such as 50 or more, while strictly controlling the data distortion. 
None of the lossy compression techniques, however, were built with the goal of homomorphic operations on compressed data.  That is, if the users want to operate on data that has been compressed, they have to first fully decompress the data and then perform the operation. This will inevitably introduce undesired execution costs and memory overhead.

There exist some compression techniques that support homomorphic operations like Blaz \cite{martel2022compressed} and PyBlaz \cite{Tripti2023}. Blaz is a simple compressor that can only support 2-D arrays and can perform simple operations such as scalar addition, matrix addition, and multiplication of scalar. Since Blaz is a single-threaded sequential code, PyBlaz was built to support a more sophisticated compression setting. PyBlaz can support arbitrary dimensional data along with a lot more operations and measures. However, none of the above works provide a guarantee of compression error boundness for both the compression pipeline and operation built on top of them.

In our work, we mainly work with error-bounded lossy compressors, hence we provide a detailed literature survey of such compressors. These state-of-the-art error-bounded lossy compressors can be split into three models.

\textit{Prediction-based lossy compression model}. This compression model generally involves three key stages: data prediction, quantization, and lossless compression. The typical examples include SZ2 \cite{sz17}, SZ3 \cite{SZ3}, and FPZIP \cite{FPZIP}. SZ2, for example, adopts a hybrid prediction method combining Lorenzo predictor \cite{lorenzo} and linear regression. 

\textit{Transform-based lossy compression model}. The key idea of this compression model is performing data transform (such as wavelet transform) to convert the raw dataset to another coefficient dataset. ZFP \cite{zfp} and SPERR \cite{sperr} are two examples that uses this technique. This step can effectively decorrelate the raw dataset, such that a large majority of the data values are very close to 0. Then, an encoding method would be applied to significantly reduce the data size. ZFP, for example, applies a so-called big-plane-based embedded encoding method \cite{zfp} which stores only necessary bits with respect to user-required error bounds; SPERR performs a set partitioning embedded block coding algorithm (called SPECK \cite{speck}) to shrink the coefficient data size. 

\textit{Higher-order singular value decomposition (HOSVD) based compression model}. HOSVD \cite{ballester2019tthresh, tuckermpi} decomposes the data (i.e., a tensor) to a set of matrices and a small core tensor, with well-preserved $L_2$ normal error. By combining HOSVD and Tucker decomposition with other techniques such as bit-plane, run-length, and/or Core tensor arithmetic coding, the data size could be significantly reduced. 


To the best of our knowledge, none of the existing error-bounded lossy compressors support homomorphism. In this paper, we fill this gap by proposing a novel error-bounded lossy compressor namely, HoSZp that can perform certain homomorphic operations in a compressed data domain. 


\section{Problem Formulation}

\label{sec:problem}

We formulate the research problem as follows. Given a raw dataset (denoted by $D_r$), we denote the corresponding compressed data as $c$ and the decompressed dataset as $\hat{D_c}$. For a homomorphic error-bounded lossy compressor (such as \textit{HoSZp}), it would perform the user-required operation on the compressed data, leading to a new compressed data stream (denoted by $z$), whose corresponding decompressed dataset is denoted as $\hat{D_z}$.  
Basically, we focus on two types of operations: univariate operation that has one input dataset (denoted $f$($\cdot$), such as negation or adding a constant) and multi-variate operation each with two input datasets (denoted $g$($\cdot$,$\cdot$) such as adding up two vectors or data values). 

Figure \ref{fig:hocompression-flow} illustrates the entire workflow for the two types of operations. In the traditional workflow (see Figure \ref{fig:hocompression-flow} (a)), the user needs to get the decompressed dataset ($\hat{D}_c$) based on the compressed data stream ($c$) before operating $f(\cdot)$. In comparison, the homomorphic compression (i.e., marked as the \textit{new workflows}  in the figure) allows users to execute this operation on top of the compressed data format ($c$) without fully decompressing $c$. 

\begin{figure}[ht] 
\footnotesize
\centering
\subfigure[{Univariate-based Workflow}]
{
\raisebox{-1cm}{\includegraphics[scale=0.35]{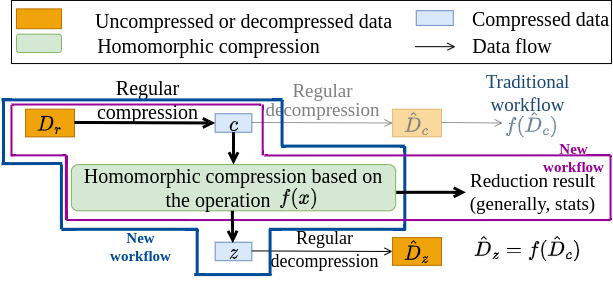}}%
}
\subfigure[{Multivariate-based Workflow}]
{
\raisebox{-1cm}{\includegraphics[scale=0.35]{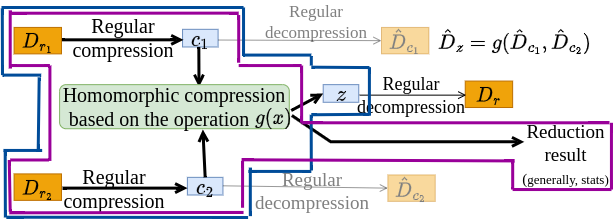}}%
}
\vspace{-1mm}
\caption{The Entire Workflow Regarding Homomorphic Compression}
\label{fig:hocompression-flow}
\end{figure}

Based on the two types of operations: univariate and multivariate, the output can also be split into two types as follows. 
\begin{itemize}
    \item \emph{Computation-as-output}: the output is a result based on a computation applied on the dataset (such as the mean, standard deviation or maximum value) -- illustrated as the purple workflow in Figure \ref{fig:hocompression-flow}. 
    \item \emph{Compression-as-output}: the output is another compressed data stream ($z$) with the specific operations already applied on top of the decompressed data -- illustrated as the blue workflow in Figure \ref{fig:hocompression-flow}.
\end{itemize}

For the compression-as-output workflow, the relationship between decompressed data for the univariate operation (denoted by $f(\cdot)$) and the decompressed data obtained from the homomorphic compressor must satisfy Formula (\ref{eq:uniho}).  
\def\formulaUniho{
\begin{array}{l}
 \hat{D}_z = f(\hat{D}_c)
\end{array}
}
\begin{equation}
\label{eq:uniho}
  \formulaUniho
\end{equation}

For the multivariate operation (denoted by $g$), the homomorphic error-bounded compressor must comply with Formula (\ref{eq:mulho}).
\def\formulaMulho{
\begin{array}{l}
 \hat{D}_z = g(\hat{D}_{c1}, \hat{D}_{c2})
\end{array}
}
\begin{equation}
\label{eq:mulho}
  \formulaMulho
\end{equation}
where $\hat{D}_{c1}$ and $\hat{D}_{c2}$ refer to the reconstructed datasets decompressed from two original raw datasets ($D_{r1}$ and $D_{r2}$).   

\begin{table}[ht]
    \footnotesize
    \caption{Notations used in HoSZp pipeline and Homomorphic operations.}
\centering
    
        \begin{tabular}{|c|c|}
        
            \hline
            \textbf{Notation} & \textbf{Description} \\
            \hline \hline
            $D_r$ & Raw Data \\
            \hline
            $c$ & Compressed Data \\
            \hline
            $\hat{D_c}$ & Decompressed Data \\
            \hline
            $z$ & Operated Compressed Data \\
            \hline
            $\hat{D_z}$ & Operated Decompressed Data \\
            \hline
            $g(\cdot)$ & multi-variate operation \\
            \hline
            $f(\cdot)$ & univariate operation \\
            \hline
            $\epsilon$ & user-defined error bound \\
            \hline
            $\mathbf{A_1}$ or $\mathbf{B_1}$ & block 1 of some $D_r$ (block\_size = \textit{m}'$\times$\textit{n}')\\
            \hline
            $\mathcal{O}_{\mathbf{A_1}}$ & Outlier of block $\mathbf{A_1}$ \\
            \hline$\varsigma_{\mathbf{A_1}}$ & Sign array of block $\mathbf{A_1}$ \\
            \hline
            $\varrho_{\mathbf{A_1}}$ & Quantized array for block $\mathbf{A_1}$ \\
            \hline
            $\mathcal{P}_\mathbf{A_1}$ & Predicted array of block $\mathbf{A_1}$ \\
            \hline
            $\mathcal{C}_{\mathbf{A_1}}$ & Bits for compressed block $\mathbf{A_1}$ \\
            \hline
        \end{tabular}

    \label{tab:notations}
\end{table}
The key objective of the research is to develop an efficient homomorphic error-bounded compressor, which can avoid the expensive full data decompression when performing various operations on top of the compressed data. It is worth noting that completely avoiding decoding during the homomorphic operation is impossible. Instead, our design motivation/objective is to minimize the decoding work as much as possible by keeping only necessary steps concerning homomorphic operations.  

Table \ref{tab:notations} summarizes all notations used in the paper, which helps discuss the design of our designed HoSZp and homomorphic operations explained in Section \ref{sec:Design} and Section \ref{sec:homomorphicOperations}.
\section{Design Overview}
\label{sec:overview}

\label{sec:Design}



The HoSZp mainly aims to support homomorphism while still respecting error-bound features and expecting to reach relatively high compression ratios and high compression/decompression performance. Towards this end, we substantially improve the cuSZp compression pipeline \cite{cuszp} (which was initially designed for only GPU) by developing a new multi-threaded CPU version (we call it \textit{SZp}) and enabling it to support homomorphism. Compared with classic SZ's original design \cite{sz16,sz17}, our compression pipeline features higher compression and decompression speed, also being much more suitable for homomorphic compression, although with gracefully degraded compression ratios.

\subsection{Compression Pipeline}
\label{sec:pipeline}
HoSZp is a floating-point data compressor consisting of three main steps: Quantization (QZ), Decorrelation (LZ), and Blockwise Fixed length byte Encoding (BF), as shown in Figure \ref{fig:pipeline}. 
\begin{figure} [ht]
\footnotesize
    \centering
    \includegraphics[width=0.48 \textwidth]{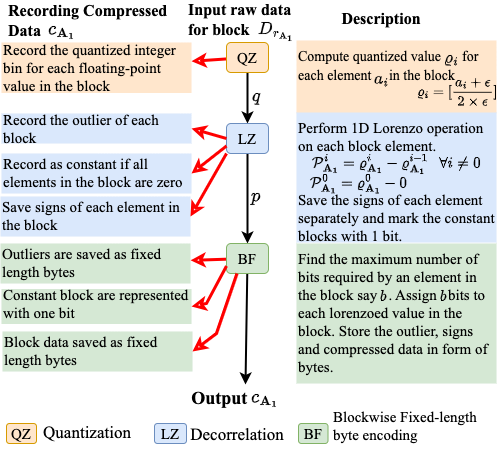}
    \vspace{-2mm}
    \caption{HoSZp compression pipeline (workflow). Decompression of HoSZp is the inverse of all the steps.}
    \label{fig:pipeline}
\end{figure}


In what follows, we describe the HoSZp pipeline using 2D arrays, $D_{r_1}$ and $D_{r_2}$, which can be extended to other dimensions (e.g., 1D and 3D) easily. Without loss of generality, each of the two 2D arrays consists of $m \times n$ elements/data points. We split each dataset into a certain block, and the block size is set to $m' \times n'$, which results in a total of $\frac{m}{m'} \times \frac{n}{n'}$ blocks for both  $D_{r_1}$ and $D_{r_2}$. These blocks are denoted as $D^1_{r_1}, D^2_{r_1}, \dots, D^{\frac{m}{m'} \times \frac{n}{n'}}_{r_1}$, and  $D^1_{r_2}, D^2_{r_2}, \dots, D^{\frac{m}{m'} \times \frac{n}{n'}}_{r_2}$,  for arrays $D_{r_1}$ and $D_{r_2}$, respectively which can be compressed independently. For simplicity, without loss of generality, we describe our homomorphic compression and the pipeline using one specific block ($D^1_{r_1}$, and $D^1_{r_2}$) from the two datasets, respectively (as shown in Figure \ref{fig:pipeline}). We denote the elements in $D^1_{r_1}$ as $\mathbf{A_1} = a^i_1$, and those in $D^1_{r_2}$ as an array $\mathbf{B_1} = b^i_1$, where $i = m' \times n'$ is the number of elements in the block and set the user-defined absolute error bound to $\epsilon$. 

We now explain each step of HoSZp using block $\mathbf{A_1}$, with a few more new defined notations wherever necessary. 

\textbf{1. Quantization (QZ):}
This step converts the entire dataset into integers based on the user-defined error bound ($\epsilon$). The quantized value (a.k.a., quantization bin number) is given by the Formula (\ref{eq:quantizationEq}).
\def\formulaMulho{
\begin{array}{l}
  \varrho^i_\mathbf{A_1} = \Big\lfloor\frac{a^i_1+\epsilon}{2 \times \epsilon}\Big\rfloor
\end{array}
}
\vspace{-2mm}
\begin{equation}
\label{eq:quantizationEq}
  \formulaMulho
  \vspace{-2mm}
\end{equation}
, where $a^i_1$ is the $i$-th floating-point value in block $\mathbf{A_1}$ and $\lfloor \rfloor$ is a floor function. This step converts all the floating-point data into integer numbers (i.e., quantization bin numbers) because the transformed data are easier to process by lossless encoders such as Huffman encoding. Note that the data reproduced based on the quantization bins during the inversion of this step are not the same as the original data, and the data loss is limited within the error bound.

\textbf{2. Decorrelation (LZ):}
In this step, we exploit that most scientific data are spatially adjacent-correlated (meaning that the data close to each other or within a region have similar value ranges). Hence, we apply a 1-D Lorenzo operator \cite{https://doi.org/10.1111/1467-8659.00681} on each block given in Formula (\ref{eq:predictionEq}). This helps in further decorrelating the integer values to reduce the necessary bits to store. We further store the \emph{outlier} (\textbf{first value of each block}) separately, represented as $\mathcal{O}_\mathbf{A_1}$. This information helps create homomorphic operations, which will be detailed later.

\begin{equation}
\label{eq:predictionEq}
  \mathcal{P}^i_{\mathbf{A_1}} = \varrho^i_\mathbf{A_1} - (\varrho^{i-1}_\mathbf{A_1} \text{ if } i \neq 0; 0 \text{ otherwise})
\end{equation}

We use an example to explain decorrelation further. Suppose the block size for a 2-D input ($D_r$) is $2 \times 2$ and we have the following data values in the block $\mathbf{A_1}$ = \{-0.025, -0.025, -0.051, -0.052\}. Then, the quantized integers for each value are $\varrho_\mathbf{A_1} = \{-1, -1, -3, -3\}$. A 1D Lorenzo predictor (subtracting each value from its corresponding previous neighbor) is applied on the on $\varrho_\mathbf{A_1}$, resulting in predicted values $\mathcal{P}_{\mathbf{A_1}} = \{0, 0, -2, 0\}$ and the outlier $\mathcal{O}_{\mathbf{A_1}} = -1$.


We store the sign of each element separately, which removes the ambiguity that can occur during the succeeding lossless encoding step, i.e., fixed-length encoding. Positive numbers are represented with a bit value of $0$, and negative are represented with a bit value of $1$. Hence, $\mathcal{P}_{\mathbf{A_1}}$=$\{0, 0, 2, 0\}$, $\mathcal{O}_{\mathbf{A_1}}$=$-$1, and $\varsigma_{\mathbf{A_1}} = \{0, 0, 1, 0\}$, where  $\varsigma_{\mathbf{A_1}}$ is the sign array for block $\mathbf{A_1}$.


\textbf{3. Blockwise Fixed-length byte encoding (BF):} The data obtained from the prediction $\mathcal{P}_{\mathbf{A_1}}$ is converted into bits using a fixed-length encoding technique. In this method, the maximum number of bits required for an integer in a given block $\mathbf{A_1}$ is calculated, and then all the numbers are stored with the same number of bits, represented by $\mathcal{C}_\mathbf{A_1}$. This reduces the number of bits (converted to bytes) required to store all the elements.

In the above example, for predicted array $\mathcal{P}_{\mathbf{A_1}} = \{0, 0, 2, 0\}$, the maximum number of bits taken by the block is $2$ bits by integer $2$, hence the entire block can be represented with $8$ bits, i.e $\mathcal{C}_\mathbf{A_1} = (00001000)_2 = (8)_{16}$.  Note that if all the elements in a $\mathcal{P}_{\mathbf{A_1}}$ (except for the outlier $\mathcal{O}_{\mathbf{A_1}}$) have integer value $0$ for any block $\mathbf{A_i}$, we indicate such a block as a constant block and represent it with bit $0$.
\begin{figure} [ht]
\vspace{-4mm}
\footnotesize
    \centering
    \includegraphics[width=0.48 \textwidth]{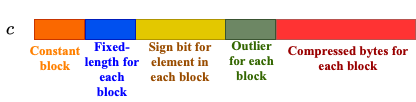}
    \vspace{-3mm}
\caption{Representation of compressed data}    
    \label{fig:compressed_data}
\end{figure}

Finally, the compressed data is stored as follows: Fixed-length for each block followed by outlier for each block $\mathcal{O}_{\mathbf{A_j}}$, followed by sign bits for each element of all blocks $\varsigma^i_{\mathbf{A_j}}$ and then the compressed bits of each block $\mathcal{C}_\mathbf{A_j}$, where is $j = \Huge\{1, 2, \ldots,\{\frac{m}{m'} \times \frac{n}{n'}\}\Huge\}$. Figure \ref{fig:compressed_data} shows a simple compressed data representation.

\subsection{Homomorphic Compression}
\label{sec:HomoComp}
The compression and decompression pipelines of HoSZp are designed toward homomorphic operations, as shown in Figure \ref{fig:operationWorkflow}. The \textbf{traditional workflow operation} performs the full decompression (i.e., decompress the fixed-length encoded bytes for each block, then the inverse of Lorenzo operation, and finally the inverse of the quantization step). The desired operation is then applied to the decompressed data, and full compression, including quantization, Lorenzo, and blockwise fixed-length encoding, is again applied to the operated data to obtain the compressed format. In the \textbf{Homomorphic operation workflow}, the main idea is avoiding the full decompression and full compression on the operated data as discussed above in the traditional workflow. This involves skipping the steps of decompression and corresponding compression as required so that the operated compressed data is homomorphic to the operated compressed data obtained using traditional workflow. 

\begin{figure}[ht] 
\footnotesize
\centering
\subfigure[{Traditional Workflow}]
{
\raisebox{-1cm}{\includegraphics[scale=0.39]{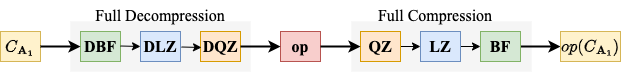}}%
}

\subfigure[{Homomorphic operation Workflow}]
{
\raisebox{-1cm}{\includegraphics[scale=0.41]{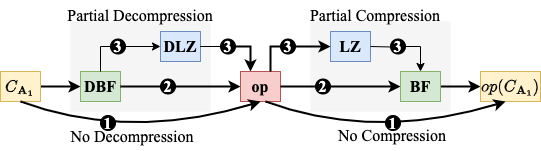}}%
}
\vspace{-1mm}
\subfigure
{
\raisebox{-1cm}{\includegraphics[scale=0.39]{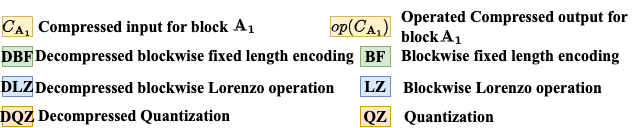}}%
}
\vspace{-1mm}
\caption{Illustration of Homomorphic compression vs. tradition workflow}
\label{fig:operationWorkflow}
\end{figure} 

Depending on the datasets and operations, there are three different ways of performing the operations. \circled{1} \textit{Directly performing operations on the input compressed data.} Operations like negation and addition of scalar on compressed data can be performed in a fully compressed space because the compressed data consists of signs and outliers saved separately, which can be used to calculate these operations. \circled{2} \textit{Performing operations on the decompressed blockwise fixed-length byte data.} Operations such as the element-wise addition of two compressed data use the decompressed blockwise fixed-length byte data to operate and then compress the data back by performing the blockwise fixed-length encoding. \circled{3} \textit{Performing operations by first decompressing data using an inverse of blockwise fixed length encoding and inverting the Lorenzo operation.} Operations such as multiplication of scalar, mean, variance, standard deviation, hardamard product, and covariance use this workflow, and the operated data obtained is again compressed back by applying the Lorenzo operator and then performing blockwise fixed-length encoding. Details of how each of the operations is performed are explained in Section~\ref{sec:homomorphicOperations}.

\subsection{Performance and Quality Analysis}
\label{sec:Performanceand QualityAnalysis}

In this section, we analyze the substantial advantage of our homomorphic compression design. 

\subsubsection{Analysis of Memory Cost and Performance for HoSZp}

The first advantage of HoSZp over the traditional workflow (SZp) is the former does not need the full decompression operation, which can save memory costs a lot at runtime. In addition, HoSZp can also significantly improve the performance of the overall execution with guaranteed correct/identical operation results, which are analyzed as follows.

Using the pipeline discussed above, the performance of our homomorphic operations could be much higher than that of traditional operations. The conventional execution requires fully decompressing the data before performing operations and then compressing the newly generated data. In comparison, the homomorphic operations (details in Section ~\ref{sec:homomorphicOperations}) do not require fully decompressing the data but need to operate on partially decompressed data in the intermediate stage, which would take less time as compared to full decompression. More specifically, the overall operation cost can be significantly reduced, especially when the error bound is relatively high. This is because the QZ+LZ potentially leads to quite a few all-zero blocks (i.e., the blocks containing only zero-value quantization bins), for which the numerical operations (such as addition, subtraction, negation) can be applied on the blockwise metadata (i.e., outliers) instead of each data value.


\subsubsection{Analysis of Homomorphism for HoSZp}
The operations designed in HoSZp are homomorphic to the operations performed on decompressed data. We provide a strict proof below.

\begin{theorem}
\label{thm:1}
The dataset ($\hat{D_z}$) reconstructed from the HoSZp-compressed bytes ($z$) based on an univariate operation $f(\cdot)$ or multivariate operation $g(\cdot)$ is identical to the results of applying $f(\cdot)$ or $g(\cdot)$ on the fully-decompressed datasets $\hat{D_c}$.
\end{theorem}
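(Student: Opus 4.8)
The plan is to view compression as a composition of the three pipeline stages, $\mathrm{Comp}=\mathrm{BF}\circ\mathrm{LZ}\circ\mathrm{QZ}$, with decompression the inverse chain $\mathrm{QZ}^{-1}\circ\mathrm{LZ}^{-1}\circ\mathrm{BF}^{-1}$, and then to reduce the statement to a commutation property for each supported operation. Three facts make this tractable: (i) the fixed-length byte encoding $\mathrm{BF}$ is a lossless bijection between the signed residual arrays $(\mathcal{P},\varsigma)$ together with the per-block outliers $\mathcal{O}$ and the byte stream $\mathcal{C}$; (ii) the $1$-D Lorenzo operator of Formula~(\ref{eq:predictionEq}) and its inverse (the prefix sum $\varrho^i_{\mathbf{A_1}}=\mathcal{O}_{\mathbf{A_1}}+\sum_{j\le i}\mathcal{P}^j_{\mathbf{A_1}}$) are mutually inverse \emph{integer-linear} maps; and (iii) the quantization-reconstruction map is the linear scaling $\varrho^i_{\mathbf{A_1}}\mapsto 2\epsilon\,\varrho^i_{\mathbf{A_1}}$, which is the only lossy stage and is applied identically in both workflows. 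First I would record these as auxiliary facts so that the remaining argument is purely algebraic.

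Next I would discharge the three operation modes of Section~\ref{sec:HomoComp} one at a time, in each case inserting the operation at the stage where HoSZp actually performs it and then pushing the remaining (unmodified) decompression stages through it. For the first mode (negation, scalar addition), which acts directly on $\mathcal{O}$ and $\varsigma$, the claim follows because flipping the sign bits and negating the outlier negates $(\mathcal{P},\mathcal{O})$, and by the linearity of $\mathrm{LZ}^{-1}$ and of the reconstruction scaling this yields exactly $-\hat{D}_c$; scalar addition shifts only the outlier of each block and is handled identically. For the second mode (element-wise addition), which operates on the decoded residual/outlier arrays, I would use that $\mathrm{LZ}^{-1}$ and the scaling $2\epsilon(\cdot)$ are additive: decoding $(\mathcal{P}_{\mathbf{A_1}}+\mathcal{P}_{\mathbf{B_1}},\,\mathcal{O}_{\mathbf{A_1}}+\mathcal{O}_{\mathbf{B_1}})$ gives $\varrho_{\mathbf{A_1}}+\varrho_{\mathbf{B_1}}$, whose reconstruction is $\hat{D}_{c1}+\hat{D}_{c2}$. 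In both modes the intermediate result is an exact integer multiple of $2\epsilon$, so re-encoding is lossless and the shortcut agrees bit-for-bit with the traditional decompress--operate--recompress path.

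The delicate case, and the one I expect to be the main obstacle, is the third mode: operations such as scalar multiplication, mean, variance, and covariance computed after inverting both $\mathrm{BF}$ and $\mathrm{LZ}$. Here the homomorphic shortcut works on the integer bins $\varrho$ while the traditional workflow works on the reconstructed floats $\hat a^i=2\epsilon\,\varrho^i$ and then re-quantizes, so I would have to show the two land on \emph{identical} bins. The key lemma is that, because reconstruction is exactly $2\epsilon\,\varrho^i$, re-quantizing $k\,\hat a^i$ reduces to $\big\lfloor k\,\varrho^i_{\mathbf{A_1}}+\tfrac12\big\rfloor$ along both paths, so the floor in Formula~(\ref{eq:quantizationEq}) selects the same integer; the remaining work is to verify this at block boundaries, for the separately stored outliers, and for constant (all-zero residual) blocks, and to confirm the subsequent $\mathrm{LZ}$ and $\mathrm{BF}$ re-encoding are unaffected by these choices. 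For the computation-as-output operations the argument is cleaner: since $\hat{D}_c=2\epsilon\,\varrho$ exactly, statistics computed from $\varrho$ and $\epsilon$ equal those computed on $\hat{D}_c$ under exact arithmetic. I would therefore concentrate the effort on making the re-quantization consistency and the edge-case bookkeeping (signs, outliers, constant blocks) rigorous, since these are precisely where a naive commutation argument could silently fail.
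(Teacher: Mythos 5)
Your plan is, for the operations the paper actually proves, the same argument the paper itself gives: quantization is the only lossy stage, the remaining stages (the prefix-sum inverse of Formula~(\ref{eq:predictionEq}) and the $2\epsilon$ reconstruction scaling) are exact linear maps, and each operation is pushed through them algebraically. The paper's written proof carries this out only for scalar addition --- it adds $\varrho_s = s/(2\epsilon)$ to the outlier and expands $2\epsilon(\mathcal{O}_{\mathbf{A_1}}+\varrho_s+\sum_{i=1}^{k}\mathcal{P}^i_{\mathbf{A_1}})$ --- and then asserts the other operations ``can be proved similarly.'' Your treatment of negation, scalar addition, and element-wise addition/subtraction is exactly this linearity argument made explicit, and your remark that the computation-as-output reductions agree because $\hat{D}_c = 2\epsilon\varrho$ holds exactly is also sound. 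Up to that point you match, and in coverage exceed, the paper's proof.

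The genuine gap is your third mode, and it is a gap in the theorem's full generality, not merely in your write-up. Your key lemma asserts that re-quantization reduces to $\lfloor k\varrho^i+\tfrac{1}{2}\rfloor$ ``along both paths,'' but the homomorphic path of Section~\ref{sec:homomorphicOperations} does something different: it quantizes the scalar to an integer bin $\varrho_s$, multiplies bins, and rescales by $2\epsilon$ with truncation, producing $\mathrm{trunc}(2\epsilon\,\varrho^i\varrho_s)$. Since $2\epsilon\varrho_s\neq s$ in general, the two paths can select different bins: with $\epsilon=10^{-2}$, $s=0.67$ (so $\varrho_s=33$, as in the paper's own example) and $\varrho^i=10$, the homomorphic path yields $\mathrm{trunc}(2\epsilon\cdot 330)=6$, while re-quantizing $s\cdot 2\epsilon\varrho^i=0.134$ via Formula~(\ref{eq:quantizationEq}) yields $7$. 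More fundamentally, exact identity in the sense of Theorem~\ref{thm:1} is impossible for multiplication-type operations (scalar multiplication, Hadamard product): $f(\hat{D}_c)=s\hat{D}_c$ is generically not an integer multiple of $2\epsilon$, whereas any decompressed output $\hat{D}_z$ necessarily is, so $\hat{D}_z=f(\hat{D}_c)$ can hold only up to a bounded discrepancy, never exactly. You correctly isolated this as the place where a commutation argument could silently fail, but your lemma cannot be repaired as stated; the honest conclusion is that the theorem holds exactly for the additive operations and exact-arithmetic reductions, and only within the error bound for the multiplicative ones. Note that the paper's own proof never confronts this case --- it proves scalar addition and waves the rest through --- so your proposal is more forthright about the obstacle, but neither you nor the paper closes it.
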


\begin{proof}
The intuitive proof idea is that none of the operations are performed by reverting the quantization (\textbf{Q}) step, which is the only lossy step in the entire pipeline. Therefore, all the operations are homomorphic to the operations when performed by the traditional SZp pipeline. 

In what follows, we provide a detailed proof for the scalar addition operation (as an example), and the correctness with other operations can be proved similarly. 
\\
\textbf{Given:}
 Predicted values of block \(\mathbf{A_1}\) are \(\mathcal{P}^1_{\mathbf{A_1}}, \mathcal{P}^2_{\mathbf{A_1}}, \ldots, \mathcal{P}^{m' \times n'}_{\mathbf{A_1}}\), Outlier is \(\mathcal{O}_\mathbf{A_1}\), Scalar value \( s \) and \( \varrho_s = \frac{s}{2\epsilon} \) where $\epsilon$ is the error-bound.
 
\textbf{Process:}
\begin{enumerate}
    \item \textit{Traditional Scalar Addition:}\\
    \textit{Step 1:} Reverse decorrelation. \((\mathcal{O}_\mathbf{A_1} + \sum_{i=1}^k \mathcal{P}^i_{\mathbf{A_1}})\) for \( k = 1, 2, \ldots, m' \times n' \).\\
    \textit{Step 2:} Reverse quantization. 2$\epsilon$\((\mathcal{O}_\mathbf{A_1} + \sum_{i=1}^k \mathcal{P}^i_{\mathbf{A_1}})\).\\ \textit{Step 3:} Add scalar \( s \). 2$\epsilon$\((\mathcal{O}_\mathbf{A_1} + \sum_{i=1}^k \mathcal{P}^i_{\mathbf{A_1}})\) + s.
    
    \item \textit{Homomorphic Scalar Addition:}\\
    \textit{Step 1:} Add \( \varrho_s \) to \( \mathcal{O}_\mathbf{A_1} \), resulting in \( \mathcal{O}'_\mathbf{A_1} = \mathcal{O}_\mathbf{A_1} + \varrho_s \).\\
        \textit{Step 2:} Reverse decorrelation. \((\mathcal{O}'_\mathbf{A_1} + \sum_{i=1}^k \mathcal{P}^i_{\mathbf{A_1}})\).\\
        \textit{Step 3:} Reverse quantization. 2$\epsilon$\((\mathcal{O}'_\mathbf{A_1} + \sum_{i=1}^k \mathcal{P}^i_{\mathbf{A_1}})\)
    
\end{enumerate}

\textbf{To Prove:}
The decompressed data from processes (1) and (2), obtained after Step 3, are equivalent.

\textbf{Proof:}
For process (2), substituting \( \mathcal{O}'_\mathbf{A_1} = \mathcal{O}_\mathbf{A_1} + \varrho_s \) in $2\epsilon(\mathcal{O}'_\mathbf{A_1} + \sum_{i=1}^k \mathcal{P}^i_{\mathbf{A_1}})
\vspace{-3mm}= 2\epsilon\mathcal{O}_{\mathbf{A_1}} + 2\epsilon\varrho_s + 2\epsilon\sum_{i=1}^k\mathcal{P}^i_{\mathbf{A_1}}$
\begin{multline*}
= 2\epsilon\mathcal{O}_{\mathbf{A_1}} + 2\epsilon\left(\frac{s}{2\epsilon}\right) + 2\epsilon\sum_{i=1}^k\mathcal{P}^i_{\mathbf{A_1}} 
= 2\epsilon\mathcal{O}_{\mathbf{A_1}} + s + 2\epsilon\sum_{i=1}^k\mathcal{P}^i_{\mathbf{A_1}}
\end{multline*}


Hence, the decompressed data obtained from the traditional and homomorphic scalar addition processes are identical, thus proving the equivalence of the two approaches.
\end{proof}

We also validate the correctness of the above theorem using experiments with real-world datasets in Section \ref{sec:DataVisualization}.

\section{Homomorphic Operations for HoSZp}
\label{sec:homomorphicOperations}
In this section, we discuss the different homomorphic operations we developed in HoSZp. In the following, we still mainly describe our design based on the blocks ($\mathbf{A_1}$ and $\mathbf{B_1}$) from the two compressed datasets without loss of generality. After applying quantization and prediction, we obtain specific metadata: the outlier for blocks $\mathbf{A_1}$ and $\mathbf{B_1}$, denoted as $\mathcal{O}_\mathbf{A_1}$ and $\mathcal{O}_\mathbf{B_1}$, the predicted values that are represented as arrays $\mathcal{P}_\mathbf{A_1}$ and $\mathcal{P}_\mathbf{B_1}$, and the sign elements represented as arrays $\varsigma_\mathbf{A_1}$ and $\varsigma_\mathbf{B_1}$, respectively. We also use intermediate quantized values for some of the operations and denote the quantized values as $\varrho_\mathbf{A_1}$ and $\varrho_\mathbf{B_1}$, respectively. Note that these quantized and predicted values are integers, which are subsequently stored as bytes using a fixed-length byte encoding scheme.

Using the above notations (also summarized in Table \ref{tab:notations}), we explain different homomorphic operations supported by HoSZp (listed in Table \ref{tab:operations}) along with examples wherever necessary. Some operations are derivable from other operations; hence, we discuss those operations briefly.
\begin{table}
    \centering
    \footnotesize
    \caption{List of operations in HoSZp, along with the type of operation and the result type obtained after the operation is applied. Note that, all the operations follow the error-boundness. This is because none of the operations apply inverse quantization on the input compressed data.}
    \vspace{-3mm}
\resizebox{0.99\columnwidth}{!}{  
    \begin{tabular}{|c|c|c|}
    \hline
        \textbf{Type} &  \textbf{Operation} & \textbf{Result Type}\\
    \hline \hline
        Univariate Operation & Negation & \textcolor{blue}{Compression-as-output}\\
        Univariate Operation &  Scalar addition & \textcolor{blue}{Compression-as-output}\\
        Univariate Operation & Scalar subtraction & \textcolor{blue}{Compression-as-output}\\
        Univariate Operation & Scalar multiplication & \textcolor{blue}{Compression-as-output}\\
        \hline
        Univariate Reduction& Mean & \textcolor{red}{Computation-as-output}\\
        Univariate Reduction & Variance & \textcolor{red}{Computation-as-output}\\
        Univariate Reduction & Standard Deviation & \textcolor{red}{Computation-as-output}\\
        \hline
        Bivariate Operation & Element-wise addition & \textcolor{blue}{Compression-as-output}\\
        Bivariate Operation & Element-wise subtraction & \textcolor{blue}{Compression-as-output}\\
        Bivariate Operation & Hardamard Product & \textcolor{blue}{Compression-as-output}\\
        \hline
        Bivariate Reduction & Covariance & \textcolor{red}{Computation-as-output}\\
         
        \hline
    \end{tabular}}
    
    \label{tab:operations}
\end{table}
\subsection{Scalar Homomorphic Operations}
\label{sec:scalarop}
Scalar Homomorphic Operations are point-wise operations that are performed on each element of the compressed dataset (or matrix). We describe each scalar homomorphic operation available in HoSZp here in detail.

\subsubsection{Negation}
Negation operation \cite{bronson1989theory} is a unary operation and is solely dependent on reversing the signs of the data (saved explicitly in our compressed data), making the operation in fully compressed space\footnote{Fully compressed space means that the compressed bits saved in our compressed data are not even partially decompressed}. 
Consider a single block array $\mathbf{A_1}$, then
the negation operation is performed as follows: Invert the signs of each element in the array $\varsigma_\mathbf{A_1}$ to obtain the inverted signs, $\neg \varsigma_{\mathbf{A_1}}$. This is done by applying a logical NOT operation ($\neg$) element-wise: $\neg \varsigma_{\mathbf{A_1}} = \{\neg \varsigma_{a^1_0}, \neg \varsigma_{a^1_1}, ..., \neg \varsigma_{a^1_{\{m' \times n'\} - 1}}\}$.


    




\subsubsection{Scalar Addition}
The scalar addition \cite{bronson1989theory} involves adding a constant scalar value to an input array. In our compressor, this is done by calculating the quantized bin index of the scalar $s$ based on the user-defined error (let the quantized bin index be $\varrho_s$) and then by adding the scalar value to the outliers $\mathcal{O}$ of each block. Since we save the $\mathcal{O}$ separately, this operation is also performed in a fully-compressed space.

Suppose we want to add a value say $0.67$ to $\mathbf{A_1}$. The quantized bin value for $s = 0.67$ will be $\varrho_s = 33$. Hence adding $\varrho_s$ to the outlier of $\mathbf{A_1}$ i.e. $\mathcal{O}_{\mathbf{A_1}} + \varrho_s = -1 + 33 = 32$. Finally, the metadata for the scalar addition will result in $\mathcal{O}_{\mathbf{A_1}} = 32$,  $\mathcal{P}_{\mathbf{A_1}} = \{0, 0, 1, 0\}$ and $\varsigma_{\mathbf{A_1}} = \{1, 1, 0, 1\}$.


    

\subsubsection{Scalar Subtraction}
Scalar subtraction \cite{bronson1989theory} involves subtracting a scalar value ($s$) from the matrix. This is similar to scalar addition, but here the scalar quantized value ($\varrho_s$) is deducted from the outliers $\mathcal{O}$ of each block. This operation is also performed in full compressed space.
\subsubsection{Scalar Multiplication}
Scalar multiplication  \cite{bronson1989theory} involves multiplying an element in a matrix. Since the values in the matrix obtained are predicted values and the prediction is made based on addition operations, it is impossible to perform multiplication without exact quantized values for each block. Hence, we revert the matrix for multiplication to obtain the corresponding quantized values denoted as $\varrho_\mathbf{A_1}$. We then get the quantized value of the scalar $s$ as $\varrho_s$ and multiply it with $\varrho_\mathbf{A_1}$. These are then reversed into compressed form to obtain a compressed scalar multiplied matrix. As we do have to decompress the data for scalar multiplication partially, this operation is performed in partially decompressed space\footnote{Partially decompressed space is defined as space, where the entire decompression pipeline is not performed instead some steps of decompression, are performed to obtain the desired results.}.

Suppose we want to multiply a scalar value $s = 3.14$ by $A_1$. The quantized bin value for $s$ will be $\varrho_s = 157$. Multiplying $\varrho_s$ to the quantized values for the block $\varrho_\mathbf{A_1} = \{-1, -1, -3, -3\}$ results in $\varrho_\mathbf{A_1} = \{-157, -157, -471, -471\}$ which is then divided by error produced by quantization of scalar value ($2 \times \epsilon$). The metadata for this scalar multiplication will be $\varrho_\mathbf{A_1} = \{-3, -3, -9, -9\}$. Finally, the compressed data will have 
$\mathcal{O}_\mathbf{A_1} = -3$,  $\mathcal{P}_\mathbf{A_1} = \{0, 0, 6, 0\}$ and $\varsigma_\mathbf{A_1} = \{0, 0, 1, 0\}$.


    

    
\subsection{Univariate Homomorphic Reductions}
\label{sec:scalarReduction}
Univariate homomorphic reductions are the operations performed on one compressed data (or matrix), which results in a single floating-point value. We describe each homomorphic reduction available in HoSZp here in detail.
\subsubsection{Mean} Mean \cite{underhill1996introstat} is calculated as the sum of all the elements in the matrix divided by the total number of elements. The quantized values of the block ($\varrho_\mathbf{A_1}$) are summed together to get block-wise addition. These block-wise additions are then divided by the total number of elements ($m \times n$) to obtain the mean of the entire matrix.  This process produces a final decompressed mean value instead of compressed data. Note that the same kernel can be used to calculate block-wise means by adding the quantized elements $\varrho_\mathbf{A_i}$ where $i = \frac{m}{m'} \times \frac{n}{n'}$ of each block and dividing each block by the number of elements in the block ($m' \times n'$).

Suppose, we want to find the mean of $\mathbf{A_1}$ where the quantized values  $\varrho_\mathbf{A_1} = \{-1, -1, -3, -3\}$. These values are then added together, resulting in $-8$, which is then divided by the number of elements (here $m' \times n' = 4$) and then finally multiplied by $2 \times \epsilon$ to get the final mean value of $-0.04$.
\subsubsection{Variance} Variance \cite{wasserman2004all} is similar to the mean operation. Still, each quantized value is first subtracted from the mean of the matrix, and then the obtained value is squared and added to get block-wise additions. The block-wise additions are then summed together and divided by the total number of elements ($m \times n$) to obtain the variance of the entire matrix. 
\subsubsection{Standard Deviation} Standard deviation \cite{bland1996measurement} operation is similar to variance operation, which is calculated by taking the square root of the variance of the compressed data.
\subsection{Bivariate Homomorphic Operations}
\label{sec:bivariateop}
Bivariate operations are performed between two compressed data (or matrices) and may result in the same or different numbers of elements for each operation. We describe each bivariate homomorphic operation supported in HoSZp as follows.
\subsubsection{Element-wise Addition}
Element-wise addition \cite{bronson1989theory} involves adding each element in the two input compressed data (of the same dimensions) according to their index value. In our compression technique, this can be achieved by adding the outlier ($\mathcal{O}_\mathbf{A}$) of each block in the first compressed data to the corresponding outlier ($\mathcal{O}_\mathbf{B}$) of each block in the other compressed data, respectively. We also need to add the signed predicted values ($\mathcal{P}_\mathbf{A}$ and $\mathcal{P}_\mathbf{B}$) for each block to obtain $\mathcal{P}_\mathbf{A+B}$

Let us add $\mathbf{A_1}$ and $\mathbf{B_1}$, then the outliers of $\mathbf{A_1}$ and $\mathbf{B_1}$ i.e. $\mathcal{O}_\mathbf{A_1}$ and $\mathcal{O}_\mathbf{B_1}$ are added along with their corresponding predicted indices i.e $\mathcal{P}_\mathbf{A_1}$ and $\mathcal{P}_\mathbf{B_1}$. In general, $\mathcal{O}_\mathbf{A_1}$ and $\mathcal{O}_\mathbf{B_1}$ and $\mathcal{P}_\mathbf{A_1}$ and $\mathcal{P}_\mathbf{B_1}$ are added together. Note that here $\mathcal{P}_\mathbf{A_1}$ and $\mathcal{P}_\mathbf{B_1}$ are predicted signed values.
In the above example, adding $\mathbf{A_1}$ and $\mathbf{B_1}$ results in following metadata $\mathcal{O}_\mathbf{A_1 + B_1} = 2$, $\mathcal{P}_\mathbf{A_1 +B_1} = \{0, 0, 2, 0\}$ and the updated sign elements will be $\varsigma_\mathbf{A_1 + B_1} = \{0, 0, 0, 0\}$.


    

\subsubsection{Element-wise Subtraction} Element-wise subtraction \cite{bronson1989theory} involves subtracting each element in the two input compressed data according to their index value.  This is similar to element-wise addition, but these are subtracted instead of adding the elements. Note that subtraction is a non-commutative operation; hence subtracting $A - B$ will not result in the same values as $B-A$.
\subsubsection{Hardamard Product} Hardamard product \cite{horn2012matrix, million2007hadamard} is the element-wise multiplication of each element in the two input compressed data according to their index value. This is similar to scalar multiplication, but instead of multiplying a scalar quantized value, the quantized values of each matrix are multiplied according to their indices.
\subsection{Bivarate Homomorphic Reductions}
\label{sec:BivariateReduction}
Bivariate Homomorphic Operations are similar to univariate homomorphic operations but here the operations are performed on two compressed data (or matrices) and result in a single floating-point value. We briefly describe each bivariate homomorphic reduction available in HoSZp here.
\subsubsection{Covaraince} Covariance \cite{park2018fundamentals} is similar to how the variance of a single compressed data is obtained. Instead, we perform a similar procedure for two compressed data to get the covariance.
\subsubsection{SSIM} SSIM  \cite{SSIM} is a structural similarity index measure that helps determine the image quality that degraded due to data compression. This operation involves the mean, variance, and covariance of the input matrix. The operation can be easily deducible using homomorphic operations discussed before; hence, we do not elaborate on it in the paper.

\begin{table}[ht]
\vspace{-3mm}
\footnotesize
    \centering    
    \caption{Scientific simulation real-world data used in the evaluation.}
    \vspace{-2mm}
    \begin{tabular}{|c|c|c|c|}
    \hline
        \textbf{Datasets} &  \textbf{\# of fields}
        & \textbf{ Dimension} &  \textbf{Data size}\\
        \hline    \hline        
        \textbf{Hurricane} & 7 & $500 \times 500 \times 100$ & 1.25GB \\
         \hline   
        \textbf{CESM-ATM} & 5 & $3600 \times 1800$ & 1.47GB \\
         \hline   
        \textbf{SCALE-LETKF} & 12 & $98 \times 1200  \times 1200$ & 4.9GB\\
         \hline   
        \textbf{Miranda} & 7 & $256 \times 384 \times 384$ & 1.87GB\\
         \hline
    \end{tabular}
    
    \label{tab:datasets}
    \vspace{-3mm}
\end{table}
\section{Performance Evaluation and Analysis}
\label{sec:PerformaceEvaluation}
In this section, we evaluate the performance of HoSZp for different operations using 4  scientific applications  across different domains (Section \ref{sec:ExperimentalSetup}). We evaluate time performance breakdown (Section \ref{sec:TimeCalculation}), throughput (Section \ref{sec:ThroughputCalculation}),  compression ratio (Section \ref{sec:compressionRatio}), and data visualization (Section \ref{sec:DataVisualization}) based on these datasets with different error bounds. We show the significant performance improvement of HoSZp operations over the traditional compression+operation workflow operated based on SZp, which is an outstanding ultra-fast error-bounded lossy compressor \cite{cuszp}. 

\subsection{Experimental Setup}
\label{sec:ExperimentalSetup}
\subsubsection{Platforms}
All the experiments are performed on a the LCLS Bebop supercomputer. Each node has two Intel Xeon E5-2695 v4 processors and a 128 GB of DRAM.  Each node on Bebop consists of 36 cores, and our multi-threaded code uses all 36 cores per node.
\subsubsection{Datasets}
The datasets used for experiments are four varied types of floating-point scientific data, as listed in Table ~\ref{tab:datasets}. These datasets are taken from Scientific Data Reduction Benchmarks \cite{sdrb} from various domains, i.e., weather simulation (Hurricane ISABEL \cite{hurricane}), climate simulation (CESM-ATM) \cite{cesm-code}, climate simulation (SCALE-LETKF) \cite{scale-letkf}, and turbulence simulation Data (Miranda \cite{miranda}).  They are commonly used to evaluate different lossy compressors available in various works of literature \cite{sz-auto,SZ3,cuszp}. 

\subsubsection{Evaluation Metrics} For evaluating the homomorphic operations provided in HoSZp, we perform time cost analysis, throughput analysis,  compression ratio, and data reconstruction quality. Below are the details of these evaluation metrics.

\begin{itemize}
    \item \textbf{Time Cost (in seconds)} helps determine the runtime a compressor takes to perform a compression or decompression in a compressor. In HoSZp, we measure the time cost by one or more kernels for its execution; hence, the total time is the sum of the time cost by each kernel execution. For a traditional workflow of SZp, the time cost to operate is the sum of decompression time, the time taken to operate, and the compression time to compress the operated data.
    \item \textbf{Throughput (GB/s)} helps determine the gigabytes of data that a compressor can process in the entire process: total data divided by the time cost to process that data.  
    \item \textbf{Compression Ratio} is the ratio of original data size to the compressed data size. 
    We will show that our HoSZp has even higher compression ratios than SZp. This is because there is no extra storage overhead in our homomophic compression design, and our design can compress the blocks with outliers more effectively.
    \item \textbf{Visualization} is important to understanding data reconstruction quality. To this end, we decompress HoSZp-generated homomorphic data and compare it with the results by applying the same operation on the traditionally decompressed data (i.e., decompress + operation).
\end{itemize}

\subsection{Performance Evaluation}

First of all, we evaluate the overall performance of the traditional compression+operation+decompression workflow based on multiple state-of-the-art error-bounded lossy compressors (including SZp, SZ2, SZ3, SZx and ZFP). As shown in Table \ref{tab:performanceComparisionWithOthers}, SZp significantly outperforms all other compressors (about 1.5$\times$ speedups over the second-best one -- SZx). The key reason is that SZp has the highest throughput in both compression and decompression from among all the compressors here, and the compression/decompression cost is the major bottleneck of the whole workflow, despite lower compression ratios compared with other compressors (to be shown later). Since SZp is the best compressor for the traditional workflow, we mainly compare our HoSZp with SZp in the following text, without loss of generality. 
\begin{table}[ht]
\vspace{-2mm}
    \centering  
    \caption{Throughput (MB/sec) for different operations on Hurricane Dataset using multiple compressors with $\epsilon$=1E-4. This experiment is performed by first performing compression on the data, then decompressing the data, and finally applying different operations on the decompressed data.}
    \vspace{-1mm}
\resizebox{0.98\columnwidth}{!}{%
    \begin{tabular}{|c|c|c|c|c|c|}
    \hline
        \textbf{Operations} 
        & \textbf{SZp} &  \textbf{SZ2} & \textbf{SZ3} & \textbf{SZx} & \textbf{ZFP} \\
        \hline \hline       
        \textbf{Negation} & 384 & 100 & 81 & 264 & 108\\
        \hline 
        \textbf{Scalar addition} & 358 & 99 & 80 & 251 & 105\\
        \hline 
        \textbf{Scalar subtraction} & 369 & 99 & 81 & 257 & 106\\
        \hline 
        \textbf{Scalar multiplication} & 366 & 99 & 81 & 255 & 106\\
        \hline 
        \textbf{Mean} & 381 & 100 & 81 & 262 & 107\\
        \hline 
        \textbf{Variance} & 287 & 92 & 76 & 214 & 98\\
        \hline 
        \textbf{Standard Deviation} & 294 & 93 & 77 & 218 & 99\\
        \hline 
        \textbf{Element-wise addition} & 354 & 98 & 80 & 249 & 105\\
        \hline 
        \textbf{Element-wise subtraction} & 354  & 98 & 80 & 249 & 105  \\
        \hline 
        \textbf{Element-wise multiplication} & 357 & 99 & 80 & 251 & 106\\
        \hline 
        \textbf{Covariance} & 230 & 85 & 72 & 181 & 91\\
        
         \hline
    \end{tabular}}
    \vspace{-1mm}
    \label{tab:performanceComparisionWithOthers}
\end{table}


\begin{figure*}[ht] 
\footnotesize
\centering
\subfigure[{Hurricane with $\epsilon=$ 1E-2}]
{ 
\raisebox{-1cm}{\includegraphics[scale=0.26]{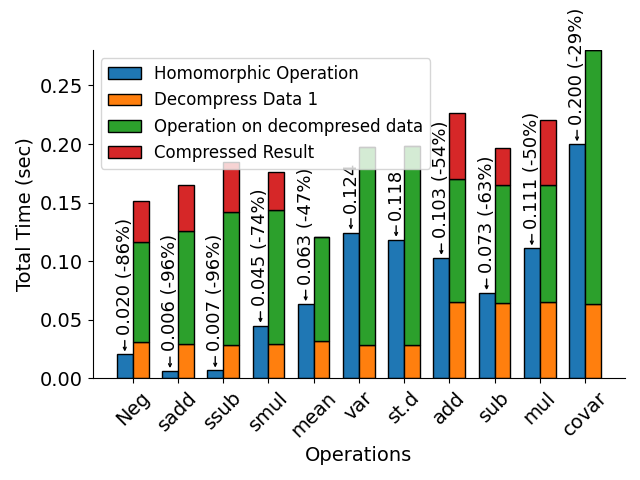}}%
}
\subfigure[{CESM-ATM  with $\epsilon=$ 1E-2}]
{
\raisebox{-1cm}{\includegraphics[scale=0.26]
{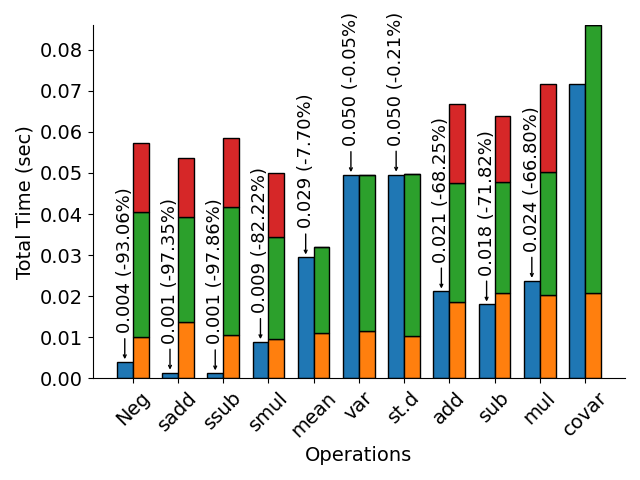}}%
}
\subfigure[{SCALE\_LETKF  with $\epsilon=$ 1E-2}]
{
\raisebox{-1cm}{\includegraphics[scale=0.26]{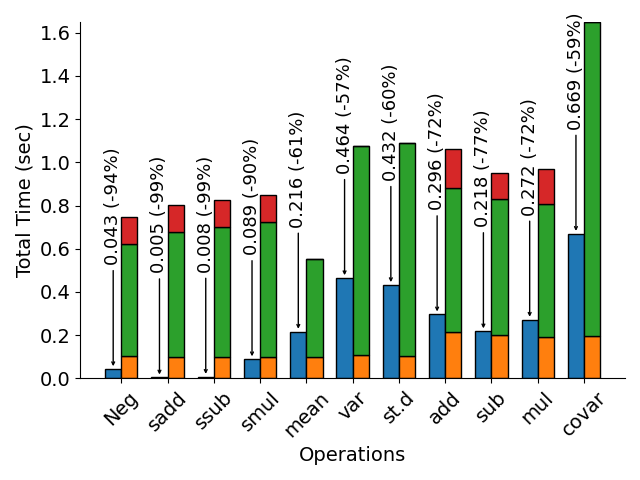}}%
}
\subfigure[{Miranda  with $\epsilon=$ 1E-2}]
{
\raisebox{-1cm}{\includegraphics[scale=0.26]{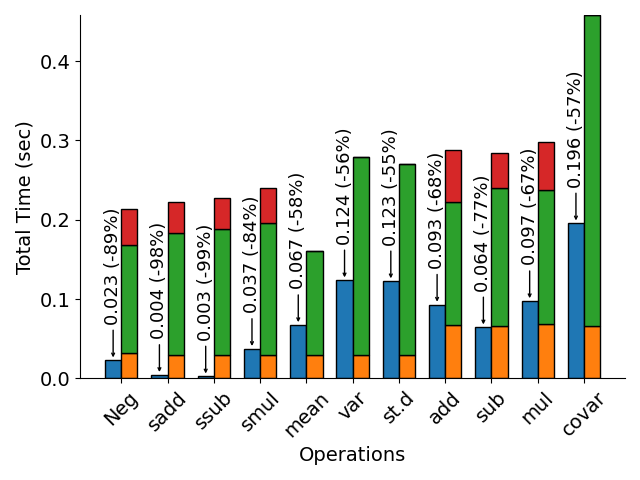}}%
}
\subfigure[{Hurricane  with $\epsilon=$ 1E-4}]
{
\raisebox{-1cm}{\includegraphics[scale=0.26]{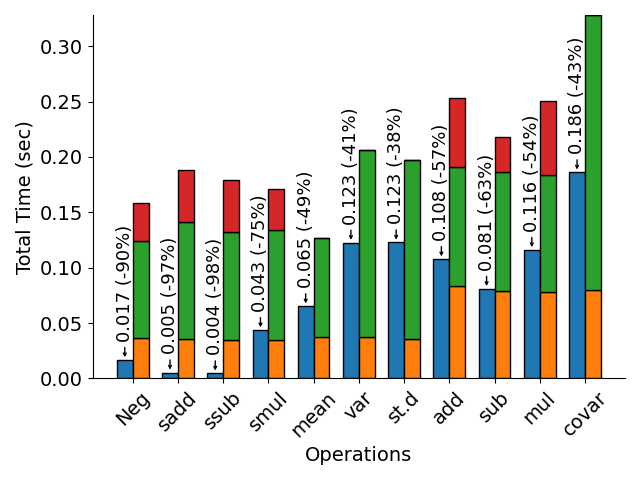}}%
}
\subfigure[{CESM-ATM  with $\epsilon=$ 1E-4}]
{
\raisebox{-1cm}{\includegraphics[scale=0.26]{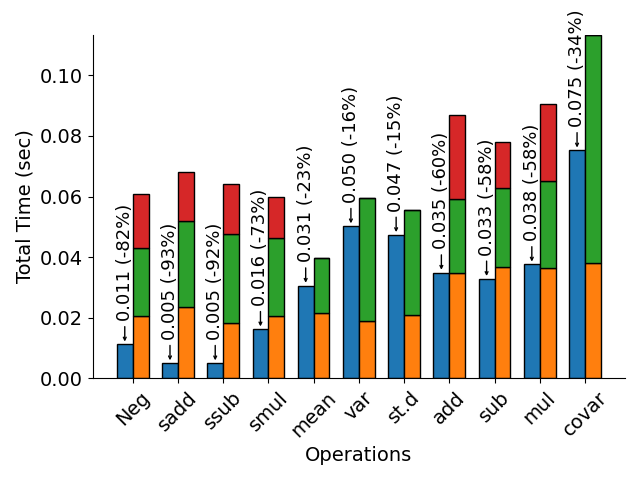}}%
}
\subfigure[{SCALE\_LETKF with $\epsilon=$ 1E-4}]
{
\raisebox{-1cm}{\includegraphics[scale=0.26]{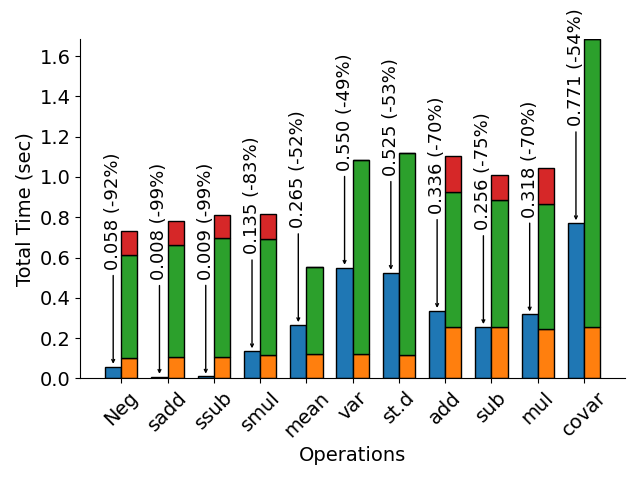}}%
}
\subfigure[{Miranda  with $\epsilon=$ 1E-4}]
{
\raisebox{-1cm}{\includegraphics[scale=0.26]{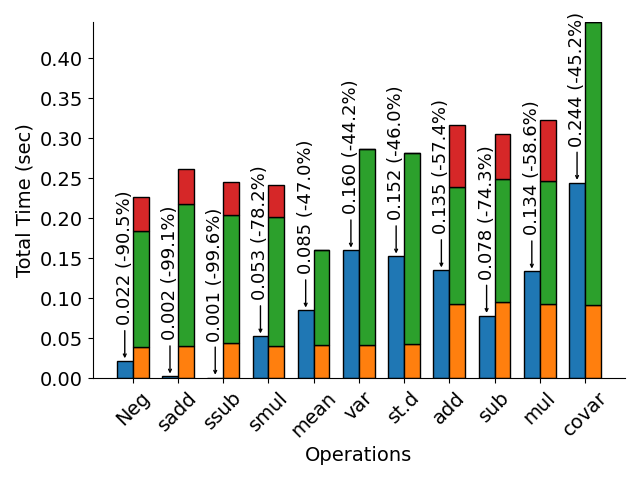}}%
}
\vspace{-1mm}
\caption{The time cost of various operations, including Decompression (orange), Operation (green), and Compression (red) times for SZp, as well as the total time (blue) for HoSZp, is compared using absolute error bounds ($\epsilon$) of 1E-2 (a-d) and 1E-4 (e-h). The total time of HoSZp encompasses the kernel time taken by different operations, including partial decompression and partial compression time taken by certain operations, as detailed in Section \ref{sec:HomoComp}. Each bar is color-coded to represent the time taken for a specific operation, as demonstrated in (a). \textbf{$<$time taken$>$ (- value \%)} on each blue bar represents the time taken by HoSZp operation and the percentage decrease in HoSZp's operation time in comparison to the corresponding SZp's operation time, respectively, for different datasets.}
\label{fig:eb-2-4Time}
\vspace{-3mm}
\end{figure*}

\begin{figure*}[ht] 
\centering
\hspace{-6mm}
\subfigure[{Hurricane   with $\epsilon=$ 1E-4}]
{
\raisebox{-1cm}{\includegraphics[scale=0.27]{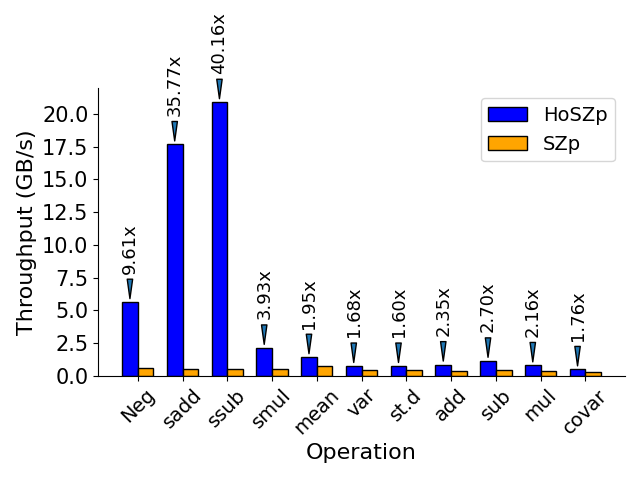}}%
}
\hspace{-1mm}
\subfigure[{CESM-ATM   with $\epsilon=$ 1E-4}]
{
\raisebox{-1cm}{\includegraphics[scale=0.27]{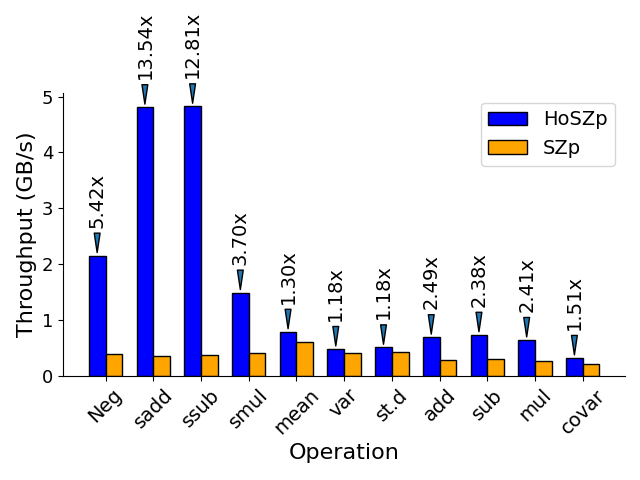}}%
}
\hspace{-4mm}
\subfigure[{SCALE\_LETKF   with $\epsilon=$ 1E-4}]
{
\raisebox{-1cm}{\includegraphics[scale=0.27]{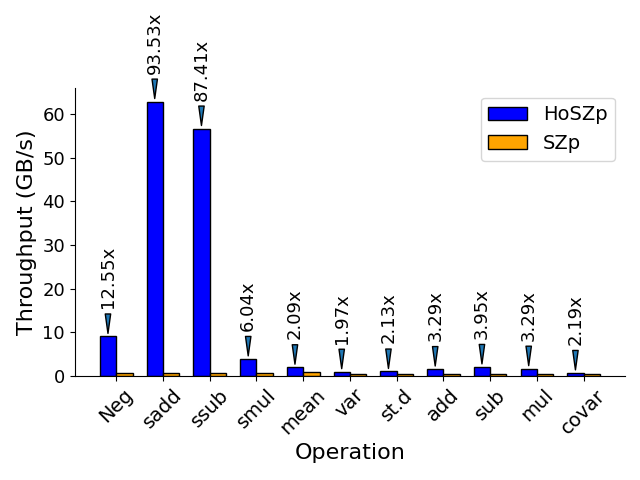}}%
}
\hspace{-4mm}
\subfigure[{Miranda   with $\epsilon=$ 1E-4}]
{
\raisebox{-1cm}{\includegraphics[scale=0.27]{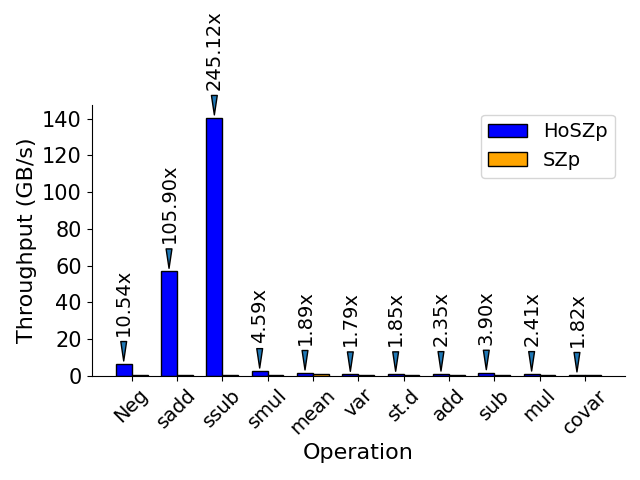} }%
}
\hspace{-8mm}
\caption{Kernel throughput for HoSZp and end-to-end throughput for SZp using absolute error bound ($\epsilon$) 1E-4.  The performance throughput ratio of each HoSZp operation with respect to SZp is shown above each blue bar.}
\label{fig:eb-4Throughput}
\vspace{-5mm}
\end{figure*}
\subsubsection{Time Cost}
\label{sec:TimeCalculation}
We evaluate the runtime for each operation in HoSZp and compare the results with SZp. For HoSZp, the time taken by each operation is calculated for the four datasets (Table \ref{tab:datasets}). For SZp, we perform the following tests: 
\begin{itemize}
    \item For scalar operation: decompression of compressed data + operations + compression.
    \item For scalar reduction, decompression + operation.
    \item For bivariate operation, decompression of compressed dataset 1 + decompression of compressed dataset 2 + operation + compression.
    \item For bivariate reduction, decompression of compressed dataset 1 + decompression of compressed dataset 2 + operation.
\end{itemize}
The time taken by each step in SZp is added to obtain the end-to-end time each operation takes. We measure the time for each field of the four datasets using two absolute error bounds of 1E-2 and 1E-4. Note that dataset 1 and dataset 2 used for bivariate operations and reduction are the same. Hence, we add the decompression time of both datasets to show the analysis later in Figure \ref{fig:eb-2-4Time}.

We observe in Figure \ref{fig:eb-2-4Time} that the time cost by HoSZp operations (blue color) is significantly lower (except for \textit{computation-as-output} for specific examples) than the time taken by operations performed using SZp, i.e., the total time cost on decompression, operation, and compression steps (shown with orange, green, and red colors). The time required for univariate and bivariate operations \textit{(compression-as-output)} is less as compared to the traditional SZp pipeline. This efficiency is attributed to the utilization of kernel operations from HoSZp, which involve either partial decompression or no decompression at all for certain operations such as negation, scalar addition, and scalar subtraction (see table \ref{tab:performanceReason}). As a result, the overhead of decompression and subsequent compression time are substnatially reduced or even completely eliminated.
We also observe that the reduction operation (\textit{computation-as-output}) applied to the compressed data takes less time than the traditional method of SZp in the Hurricane, SCALE-LETKF, and Miranda datasets. However, the reduction operations for CESM-ATM data with $\epsilon$ = 1E-2 (as depicted in Figure \ref{fig:eb-2-4Time}) do not show a significant reduction compared to the time required by SZp's reduction operations. This difference arises from the prominent dependence of reduction operations on the number of constant blocks (see Table \ref{tab:constantblocks}). These constant blocks contain zero values, enabling their exclusion during computation. Consequently, the accrued time savings are relatively minimal for datasets such as CESM-ATM with $\epsilon$ = 1E-2.
Additionally, it is noteworthy that the reduction operations exhibit comparatively improved performance in CESM-ATM data utilizing $\epsilon$ = 1E-4 despite the reduced number of constant blocks. This discrepancy arises because the performance depends on the floating-point ratio to integer computations, as summarized in Table \ref{tab:FPvsInt}. Notably, the floating-point values, obtained through SZp's decompression with $\epsilon$ = 1E-4, inherently possess higher precision, and hence they require more computation time than HoSZp's integer computations.
Note performing HoSZp reduction operations might not always be faster than the traditional method. However, it still saves memory as we do not have to perform complete decompression and store all the data in the memory to perform reduction operations.

\subsubsection{Throughput Analysis}
\label{sec:ThroughputCalculation}
We evaluate the throughput of HoSZp and compare it with SZp for different operations. We measure the end-to-end throughput of  SZp for each operation using the absolute error-bound 1E-4 for each field of the four datasets. We also evaluate the kernel throughput of HoSZp using the same absolute error bound and compare it with the end-to-end throughput of SZp. 
\begin{table}
\footnotesize
    \centering  
    \caption{Reasons to performance improvement for different operations}
    \vspace{-1mm}
    \begin{tabular}{|c|c|}
    \hline
        \textbf{Operations} &  \textbf{Reason}
        \\
        \hline  \hline      
        \multirow{2}*{\textbf{Scalar operations}} & No decompression\\
        &(partial decompression 
        + constant blocks \\
         &  only for scalar multiplication) \\ \hline 
        \textbf{Scalar Reductions} & constant blocks + integer data operations\\ \hline
        \multirow{2}*{\textbf{Bivariate Operations}} & Partial decompression + constant blocks \\
         & + integer data computation\\ \hline
        \textbf{Bivariate Reductions} & constant blocks + integer data computation\\
        
         \hline
    \end{tabular}
    \vspace{-1mm}
    \label{tab:performanceReason}
\end{table}
We observe in Figure \ref{fig:eb-4Throughput} that the throughput of HoSZp (shown with navy blue color) is usually higher than the end-to-end throughput of SZp (shown with yellow color). This is because we execute all operations within fully or partially compressed spaces while also excluding constant block computations. As a result, time is saved during the data decompression. Hence, more data can be processed per unit of time, increasing the throughput of HoSZp. The throughput of reduction operations (\textit{computation-as-output}) is lowest amongst other scalar and bivariate operations because the reduction operations are dependent on the dataset, i.e., constant and non-constant block (as explained previously).
\begin{table}
\footnotesize
    \centering  
    \caption{Total blocks and constant blocks in each dataset over all the fields for error bound $(\epsilon)$ 1E-2.} 
    \vspace{-1mm}
    \begin{tabular}{|c|c|c|c|c|}
    \hline
        \textbf{Datasets} 
        & \textbf{Const. blocks} &  \textbf{Total blocks} & \textbf{\% (Const./Total)} \\
        \hline \hline       
        \textbf{Hurricane}  & 360827  & 2734375 & 13\% \\
        \hline
        \textbf{CESM-ATM}  & 7817 & 506250 & 1.5\%\\
        \hline
        \textbf{SCALE-LETKF} & 1071863 & 26460000 & 4\%\\
        \hline
        \textbf{Miranda}  &593722 &4128768 & 14\%\\
         \hline
    \end{tabular}
    \vspace{-1mm}
    \label{tab:constantblocks}
\end{table}
\begin{table}
\footnotesize
    \centering  
    \caption{Total time (in seconds) of mean calculation cost on integer vs. floating-point with error bound $(\epsilon)$ 1E-2.} 
    \vspace{-1mm}
    \begin{tabular}{|c|c|c|c|c|}
    \hline
        \textbf{Datasets} &  \textbf{Int comp.}
        & \textbf{FP comp.} & \textbf{\% (Int/FP)}  \\
        \hline \hline       
        \textbf{Hurricane} & 0.46 & 0.65 & 71\%\\
        \hline
        \textbf{CESM-ATM} & 0.08 & 0.13 & 61\%\\
        \hline
        \textbf{SCALE-LETKF} & 4.72& 5.81 & 81\%\\
        \hline
        \textbf{Miranda} &0.71 &0.94 & 75\%\\
         \hline
    \end{tabular}
    \vspace{-1mm}
    \label{tab:FPvsInt}
\end{table}

\subsubsection{Compression Ratio}
\label{sec:compressionRatio}
In Table \ref{tab:compressionRatio}, we evaluate the compression ratios for different compressors using an absolute error bound of 1E-4. HoSZp outperforms SZp in terms of compression ratio but falls behind SZ, SZ3, and ZFP. The higher compression ratios of SZ, SZ3, and ZFP can be attributed to their advanced data decorrelation techniques, such as dynamic interpolation and orthogonal transform, and their effective lossless-encoding methods, such as Huffman/Zstd \cite{zstd} and embedded coding \cite{zfp}. 
The HoSZp may have a higher compression ratio than SZp does, mainly because HoSZp reorganizes the outliers in the pipeline (see Figure~\ref{fig:pipeline}), making the linear recurrence decoding steps for combining the compressed data for each block easier. 
This improvement eliminates the need to store compressed byte length limits per block, a significant limitation in SZp's compression efficiency \cite{SZp}. It is worth noting that although HoSZp has lower compression ratios than other modern compressors such as SZ, SZ3 and ZFP, it exhibits substantially higher  throughputs on various operations (see Figure \ref{fig:eb-4Throughput} and Table \ref{tab:performanceComparisionWithOthers}): 2$\times$-245$\times$ in most of cases, which is critical to the online execution performance of large-scale scientific applications. 

\begin{table}[ht]
\vspace{-2mm}
    \centering  
    \caption{Average compression ratios for different scientific simulation data using different compressors.}
    \vspace{-1mm}
\resizebox{0.98\columnwidth}{!}{%
    \begin{tabular}{|c|c|c|c|c|c|c|c|}
    \hline
        \textbf{Datasets} &  \textbf{HoSZp}& \textbf{SZp} &  \textbf{SZ} & \textbf{SZ3} & \textbf{SZx} & \textbf{ZFP} \\
        \hline     \hline   
        \textbf{Hurricane} & 2.78 & 1.59 &8.83 & 10 & 3.6 & 4.4\\
         \hline   
        \textbf{CESM-ATM} & 2.68 & 2.33 &6.48 & 5.0& 2.17 & 3.01\\
         \hline   
         \textbf{SCALE-LETKF} & 17.02 & 15.21 & 360.65 & 205.74 & 37.13 & 69.48\\
          \hline   
          \textbf{Miranda}  & 6.19  & 4.97 & 24.64 & 27.70 & 5.11 & 8.78\\
         \hline
    \end{tabular}}
    \vspace{-1mm}
    \label{tab:compressionRatio}
\end{table}

\subsubsection{Data Visualization}
We evaluate the data quality of the operated data obtained using HoSZp and SZp by visualization.
In this experiment, we used the precipitation dataset of Hurricane Data for timestep 44 (Dataset 1) and timestep 48 (Dataset 2) and compressed it using the absolute error bound ($\epsilon$) of 1E-1. Then, we test HoSZp's element-wise addition operation on the compressed data of the two datasets 
\begin{figure}[ht] 
\footnotesize
\centering
\subfigure[{Decompressed dataset 1}]
{
\raisebox{-1cm}{\includegraphics[scale=0.21]{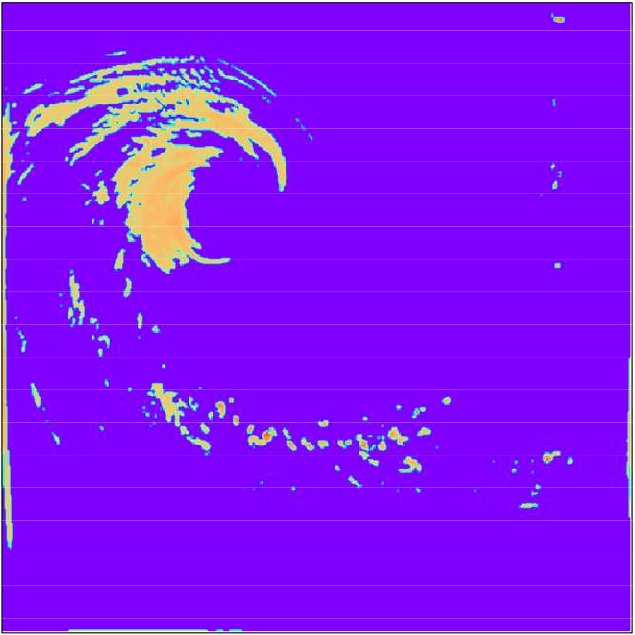}}%
}
\hspace{0.5mm}
\subfigure[{Decompressed dataset 2}]
{
\raisebox{-1cm}{\includegraphics[scale=0.21]{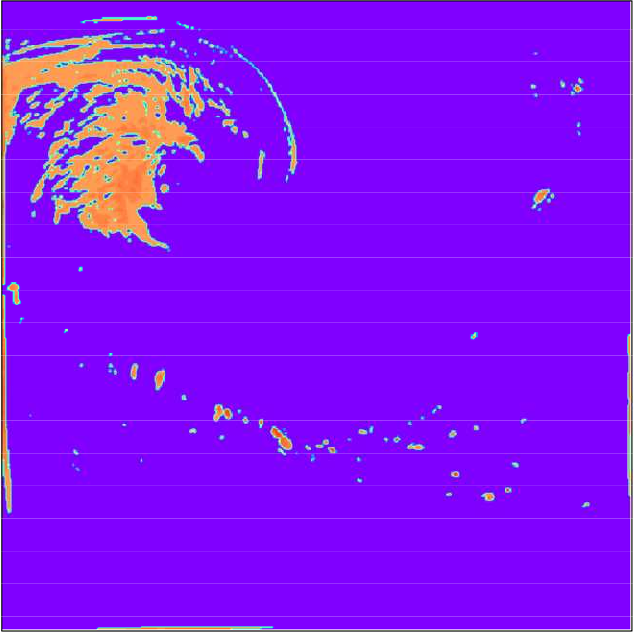}}%
}
\subfigure[{Traditional addition}]
{
\raisebox{-1cm}{\includegraphics[scale=0.21]{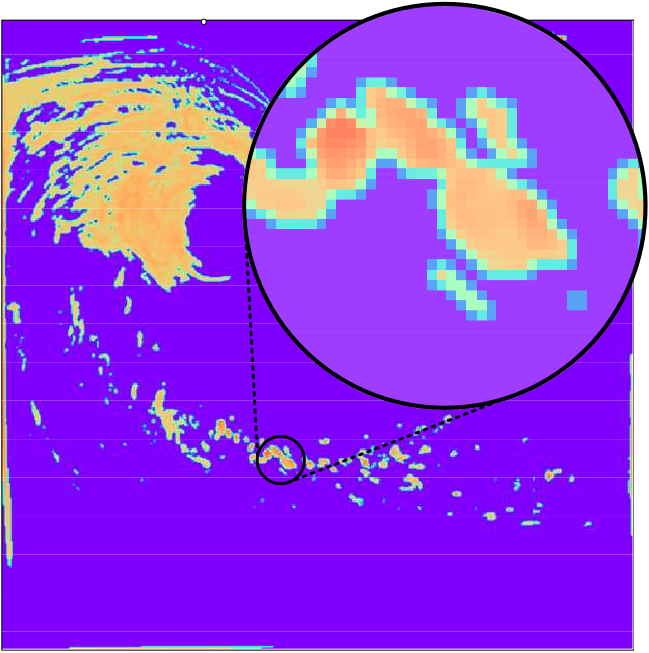}}%
}
\subfigure[{Homomorphic addition}]
{
\raisebox{-1cm}{\includegraphics[scale=0.21]{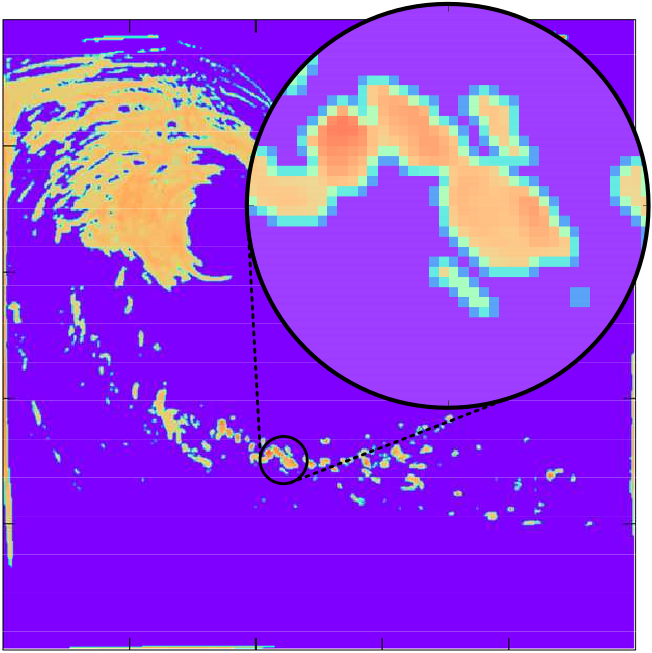}}%
}
\vspace{-3mm}
\caption{Visualization of Decompressed Hurricane Precipitation Dataset for (a) timestep 44 and (b) timestep 48 in log domain for slice number 38. The result of adding (a) and (b) using traditional workflow is shown in (c), and the result of adding the compressed data of (a) and (b) homomorphically is shown in (d). Experiments are performed with error bound $\epsilon =$ 1E-4.}
\label{fig:Viseb-1}
\end{figure}
(i.e., Dataset 1 + Dataset 2), and then perform decompression to obtain the final output. For a comparison, we also perform the same addition operation using the traditional workflow of SZp where we first decompress the two compressed datasets and then apply element-wise addition.

The visualization results are presented in Figure \ref{fig:Viseb-1}. Figure \ref{fig:Viseb-1}(a) and \ref{fig:Viseb-1}(b) shows the decompressed dataset obtained by applying the SZp workflow. Figure \ref{fig:Viseb-1}(c) shows the SZp's traditional workflow result, and \ref{fig:Viseb-1}(d) shows HoSZp's result.
We observe that Figure \ref{fig:Viseb-1}(c) and \ref{fig:Viseb-1}(d) are identical to each other. This is because all the operations in HoSZp do not perform decompression of quantization (\textbf{Q}), and hence no further data loss may happen in the HoSZp's operations when compared with the traditional method of performing the operations (also see Theorem \ref{thm:1}). This experiment clearly shows that HoSZp operations can retain the data quality and are homomorphic to the operation when applied using traditional workflow.
\label{sec:DataVisualization}

\subsubsection{Distributed Experiments}

Considering compression techniques are widely used in distributed systems~\cite{huang2024optimizing, huang2023ccoll, huang2023gzccl}, we perform a distributed experiment based on Reverse Time Migration (RTM) simulation data with up to 16 nodes, in order to validate the effectiveness of HoSZp. RTM \cite{RTM-tutorial} is a typical parallel application that may generate vast amount of data to be kept in memory \cite{Huang-RTM-ccgrid23}, which will be used in the calculations later. The traditional solutions  transfer the data from memory to disks and load them back to memory when needed, suffering a significantly high overhead. Latest solutions such as \cite{Huang-RTM-ccgrid23} maintain the lossy-compressed data in memory, and fully decompress them when needed in the later simulation. We evaluate HoSZp vs. SZp based on this application. Specifically, each execution node compresses the dataset and sends the compressed data to a root node. The root node decompresses the received data and executes the aggregation (sum) operation. Table \ref{tab:MPI-HoSZp} shows our HoSZp is able to achieve up-to 2.08$\times$ speedups compared with the traditional SZp. The performance gain increases with error-bounds, because the larger the error-bound is, the smaller the compressed data size will be, which benefits our HoSZp more compared with the traditional SZp.


\begin{table}[ht]
\vspace{-2mm}
\centering
\caption{Speed-ups of HoSZp over SZp in a distributed experiment.}
\label{tab:MPI-HoSZp}
\vspace{-1mm}
\resizebox{0.6\columnwidth}{!}{%
\begin{tabular}{|c|c|c|c|}
\hline
\textbf{ABS} & \textbf{4 nodes} & \textbf{8 nodes} & \textbf{16 nodes} \\ \hline\hline
\textbf{1E-5} & 1.66$\times$ & 1.82$\times$ & 1.95$\times$ \\ \hline
\textbf{1E-4} & 1.71$\times$ & 1.86$\times$ & 2.02$\times$ \\ \hline
\textbf{1E-3} & 1.77$\times$ & 1.91$\times$ & 2.08$\times$ \\ \hline
\end{tabular}%
}
\vspace{-1mm}
\end{table}

\section{Conclusion and Future Work}
We propose HoSZp, an error-bounded lossy compressor that can perform homomorphic operations in compressed space. HoSZp consists of a bunch of lightweight scalar and multi-variate operations. We perform experiments with real-world datasets, and the key findings are summarized as follows:
\begin{itemize}
    \item HoSZp can achieve higher throughput than a SZp while providing reasonable compression ratios.
    \item HoSZp features guaranteed error control with identical operation results in the compressed data compared with traditional full-decompression-dependent workflow. 
    \item We perform an experiment using a real-world RTM dataset to show the effectiveness of using HoSZp in a distributed environment, with  2.08$\times$ performance speedup.
\end{itemize}
In the future, we will  add more homomorphic operations, like distance measures, similarity measures, and homomorphic compositions to make the tool more powerful. 
\label{sec:conclude}

\section*{Acknowledgments}
This research was supported by the U.S. Department of Energy, Office of Science, Advanced Scientific Computing Research (ASCR), under contract DE-AC02-06CH11357. The experimental resource for this paper was provided by the Laboratory Computing Resource Center on the Bebop cluster at Argonne National Laboratory. 

We acknowledge the computing resources provided on Bebop (operated by Laboratory Computing Resource Center at Argonne) and ACCESS Anvil machine operated by Purdue University.

\bibliographystyle{IEEEtran}
\bibliography{references}

\end{document}